\RequirePackage[2014/01/01]{latexrelease}   
\documentclass[hidelinks,11pt,english]{article} 
\usepackage{hyperref}
\usepackage{times}
\usepackage{helvet}
\usepackage{courier}
\usepackage{amsfonts}
\usepackage{amsmath}
\usepackage{etex,etoolbox}

\usepackage{booktabs}
\usepackage{amsthm}
\usepackage{graphicx}
\graphicspath{ {./images/} }
\usepackage{mathrsfs}
\usepackage{enumerate}
\usepackage{color}
\usepackage{etex,etoolbox}
\usepackage{float}
\usepackage{xcolor}
\usepackage[round]{natbib}
\usepackage[margin=1in,bottom=0.9in,includefoot]{geometry}
\usepackage[toc,page]{appendix}




\newcommand{\musiclab}{\textsc{MusicLab}}

\makeatletter
\providecommand{\@fourthoffour}[4]{#4}
\def\fixstatement#1{%
	\AtEndEnvironment{#1}{%
		\xdef\pat@label{\expandafter\expandafter\expandafter
			\@fourthoffour\csname#1\endcsname\space\@currentlabel}}}

\globtoksblk\prooftoks{1000}
\newcounter{proofcount}

\long\def\proofatend#1\endproofatend{%
	\edef\next{\noexpand\begin{proof}[Proof of \pat@label]}%
		\toks\numexpr\prooftoks+\value{proofcount}\relax=\expandafter{\next#1\end{proof}}
	\stepcounter{proofcount}}

\def\printproofs{%
	\count@=\z@
	\loop
	\the\toks\numexpr\prooftoks+\count@\relax
	\ifnum\count@<\value{proofcount}%
	\advance\count@\@ne
	\repeat}
\makeatother

\DeclareMathOperator*{\argmax}{arg-max}

\newtheorem{theorem}{Theorem}[section]
\newtheorem{thm}{Theorem}[section]

\newtheorem{proposition}[theorem]{Proposition}

\newtheorem{cor}[theorem]{Corollary}
\fixstatement{thm}
\fixstatement{lem}
\fixstatement{thm}
\fixstatement{cor}
\newtheorem{example}{Example}[section]

\pdfinfo{
	/Title (On the robustness of the MusicLab model, continuation and further analysis)
	/Author (Put All Your Authors Here, Separated by Commas)}
\setcounter{secnumdepth}{2}  
\begin{document}
	
	\title{Trial-Offer Markets with Continuation}
\author{
	Pascal Van Hentenryck\footnote{Industrial and Operations Engineering \& Computer Science and Engineering, University of Michigan, Ann Arbor (pvanhent@umich.edu).}
	\and
	Alvaro Flores\footnote{College of Engineering \& Computer Science, Australian National University.}
	\and
	Gerardo Berbeglia\footnote{Melbourne Business School, The University of Melbourne.}
}
	
	\maketitle
	\begin{abstract}

			Trial-offer markets, where customers can sample a product before
			deciding whether to buy it, are ubiquitous in the online
			experience. Their static and dynamic properties are often studied by
			assuming that consumers follow a multinomial logit model and try
			exactly one product. In this paper, we study how to generalize    
			existing results to a more realistic setting where consumers can try
			multiple products. We show that a multinomial logit model with
			continuation can be reduced to a standard multinomial logit model
			with different appeal and product qualities. We examine the
			consequences of this reduction on the performance and predictability
			of the market, the role of social influence, and the ranking
			policies.
	\end{abstract}

\section{Introduction}

With the ubiquity of online market places such as {\tt Amazon} and
{\tt iTunes}, there has been increasing interests in understanding and
modeling the behavior of such trial-offer markets, where customers
sample a product before deciding whether to buy it. These online
markets are particularly interesting because of their greater
opportunities in shaping the customer experience and their flexibility
in exploiting visibility bias \cite{Lerman2014,buscher2009you,maille2012sponsored,joachims2005accurately} and social signals
\cite{engstrom2014demand,viglia2014please}.

Traditionally, such markets have been studied using extensions
of Multinomial Logit Models
\cite{krumme2012quantifying,Lerman2014,PLOSONESI,QRANKING,daly1978improved,talluri2004revenue,rusmevichientong2010assortment,rusmevichientong2010dynamic}, where
participants can only sample one product (e.g., listening to a song)   
before deciding whether to buy the product. Multinomial Logit Models
have been studied in marketing for several decades \cite{luce1959},  but
the addition of social signals has been shown to affect the market
behavior significantly      
\citep{salganik2006experimental,Lerman2014,PLOSONESI}. Hence it is   
interesting to study how these recent results are affected by more
realistic settings in which consumers can try multiple products. Also, experimental analysis 
using eye-tracking inspired cascade models, introduced first by \cite{Craswell_2008}, and showed fit
experimental data better than separable models in presence of position bias. The intuition behind this model, is that users consider products in a top to bottom fashion, as they were presented in an ordered list, and they only look to the next product if the current one was not selected. 

This paper is an attempt at generalizing Multinomial Logit Models to
account for a richer class of customer behavior. It endows the
Multinomial Logit Model with a notion of continuation, which enables
participants to sample multiple products before making a purchase. The
paper studies how this generalization affects market efficiency and
the role of social influence. The main contributions of the paper can
be summarized as follows:
\begin{enumerate}
	\item We show that a trial-offer market with continuation can be
	reduced to a traditional trial-offer market by adjusting the quality
	and appeal of the products and we quantify how the continuation
	model affects market efficiency;
	
	\item We show that, under a natural continuation model, the
	quality-ranking policy, where the products are ranked by quality, is
	preserved by the reduction, but not the performance ranking, which
	optimizes the market performance at each step. We also show that
	social influence remains beneficial in this setting under the
	quality ranking;                                             
	
	\item Finally, we show experimental results that indicate that the
	popularity ranking, which ranks the product by popularity, benefits
	more from the generalization than the quality and performance
	ranking, unless the continuation is strongly dependent of the
	product just sampled. This improvement however is not enough to
	bridge the gap with the performance and quality rankings.
\end{enumerate}

\section{Trial-Offer Markets}
\label{section:market}

This paper considers trial-offer markets in which participants can try
a product before deciding whether to buy it. Such settings are common
in online cultural markets (e.g., books, songs, and videos). In this
paper, the trial-offer market is composed of $n$ products and each
product $i \in \{1,\ldots,n\}$ is characterized by two values:
\begin{enumerate}
	\item Its {\em appeal} $A_i$ representing the inherent preference
	of trying product $i$;
	
	\item Its {\em quality} $q_i$ representing the
	probability of purchasing product $i$ given that it was tried.
\end{enumerate}
Each participant, when entering the market, is presented with a
product list $\pi$: She then tries a product $s$ in $\pi$ and decides
whether to purchase $s$ with a certain probability. The product list
is a permutation of $\{1,\ldots,n\}$ and each position $p$ in the list
is characterized by its {\em visibility} $v_p > 0$ which is the
inherent probability of trying a product in position $p$. Since the
list $\pi$ is a bijection from positions to products, its inverse is
well-defined and is called a ranking. We denote rankings by $\sigma$
in the following, $\pi_i$ denotes the product in position $i$ of the
list $\pi$, and $\sigma_i$ denotes the {\em position} of product $i$
in the ranking $\sigma$. Therefore $v_{\sigma_i}$ denotes the
visibility of the position of product $i$.

The probability of trying product $i$ given a list $\sigma$ is
\[
p_{i}(\sigma) =  \frac{v_{\sigma_i} A_i}{\sum_{j=1}^n v_{\sigma_j} A_j}.
\]
Given a ranking $\sigma$, the expected number of purchases is
\begin{align}\label{exp_down}
\lambda(\sigma)=\sum_{i=1}^n p_{i}(\sigma) \ q_i.
\end{align}
The traditional static market optimization problem consists of finding
a ranking $\sigma^*$ maximizing $\lambda(\sigma)$, i.e.,
\begin{align}\label{p-ranking_formulae}
\sigma^*=\argmax_{\sigma \in S_n} \sum_{i=1}^n p_{i}(\sigma) \ q_i
\end{align}
\noindent
where $S_n$ represents the symmetry group over
$\{1,\ldots,n\}$. Observe that consumer choice preferences for trying
the products are essentially modeled as a discrete choice model based
on a multinomial logit \cite{luce1959} in which product utilities are
affected by their position.

\paragraph{Social Influence}

Following \cite{krumme2012quantifying}, this paper considers a dynamic
market where the appeal of each product changes over time according to
a social influence signal. Given a social signal $d=(d_1,\ldots,d_n)$,
where $d_i$ denotes the number of purchases of product $i$, the appeal
of $i$ becomes $A_i + d_i$ and hence the probability of trying $i$       
given a list $\sigma$ becomes
\[
p_{i}(\sigma,d) =  \frac{v_{\sigma_i} (A_i + d_i)}{\sum_{j=1}^n v_{\sigma_j} (A_j + d_j)}.
\]
Note that the probability of trying a product depends on its position
in the list, its appeal, and its number of purchases ($d_{i,t}$) at
time $t$. As the market evolves over time, the number of purchases
could dominate the appeal, and the sampling probability of a product
becomes its market share. Without social influence, a dynamic market
reduces to solving the static optimization problem repeatedly. This
set-up is the {\em independent condition}.

In the following, without loss of generality, we assume that the
qualities and visibilities are non-increasing, i.e., $ q_{1}\geq
q_{2}\geq\cdots\geq q_{n} $ and $ v_{1}\geq v_{2}\geq\cdots\geq
v_{n}$.  We also assume that the qualities and visibilities are
known. In practical situations, the product qualities are obviously
not known. But, as shown by \citet{PLOSONESI}, they can be recovered
accurately and quickly, either before or during the market execution.
For simplicity, we use $ a_{i,t} = A_i + d_{i,t} $ to denote the
appeal of product $i$ at step $t$. When the step $t$ is not relevant,
we omit it and use $a_i$ instead.

\paragraph{Ranking policies}

Following \cite{PLOSONESI}, this paper explores several ranking     
policies. The {\em performance ranking} maximizes the expected number
of purchases at each iteration, exploiting all the available
information globally, i.e., the appeal, the visibility, the purchases,
and the quality of the products. More precisely, the performance
ranking at step $k$ produces a ranking $\sigma_k^*$ defined as
\[
\sigma_k^* = \argmax_{\sigma\in S_n} \sum_{i=1}^n p_{i}(\sigma,d_k)\cdot q_i
\]
where $d_k = (d_{1,k},\ldots,d_{n,k})$ is the social influence signal
at step $k$.  The performance ranking uses the probability
$p_{i}(\sigma,d_k)$ of trying products $i$ at iteration $k$ given
ranking $\sigma$, as well as the quality $q_i$ of product $i$. The
performance ranking can be computed in strongly polynomial time and
the resulting policy is scalable to large markets
\cite{PLOSONESI}. The {\em quality ranking} simply orders the products
by quality, assigning the product of highest quality to the most
visible position and so on. With the above assumptions, a quality
ranking $\sigma$ satisfies $\sigma_i = i \;\; (1 \leq i \leq n)$.  The
{\em popularity ranking} was used by \citet{salganik2006experimental}
to show the unpredictability caused by social influence in cultural
markets. At iteration $k$, the popularity ranking orders the products
by the number of purchases $d_{i,k}$, but these purchases do not
necessarily reflect the inherent quality of the products, since they
depend on how many times the products were tried. We also follow
\cite{PLOSONESI} and use {\sc Q-rank}, {\sc D-rank}, and {\sc P-rank}
to denote the policies using the quality, popularity, and performance
rankings respectively. We also use {\sc R-rank} to denote the    
policy that simply presents a random order at each period.

\section{Trial-Offer Markets With Continuation}
\label{section:Continuation}

\begin{figure}[t]
	\hspace{1cm}
	\includegraphics[width=0.8\textwidth]{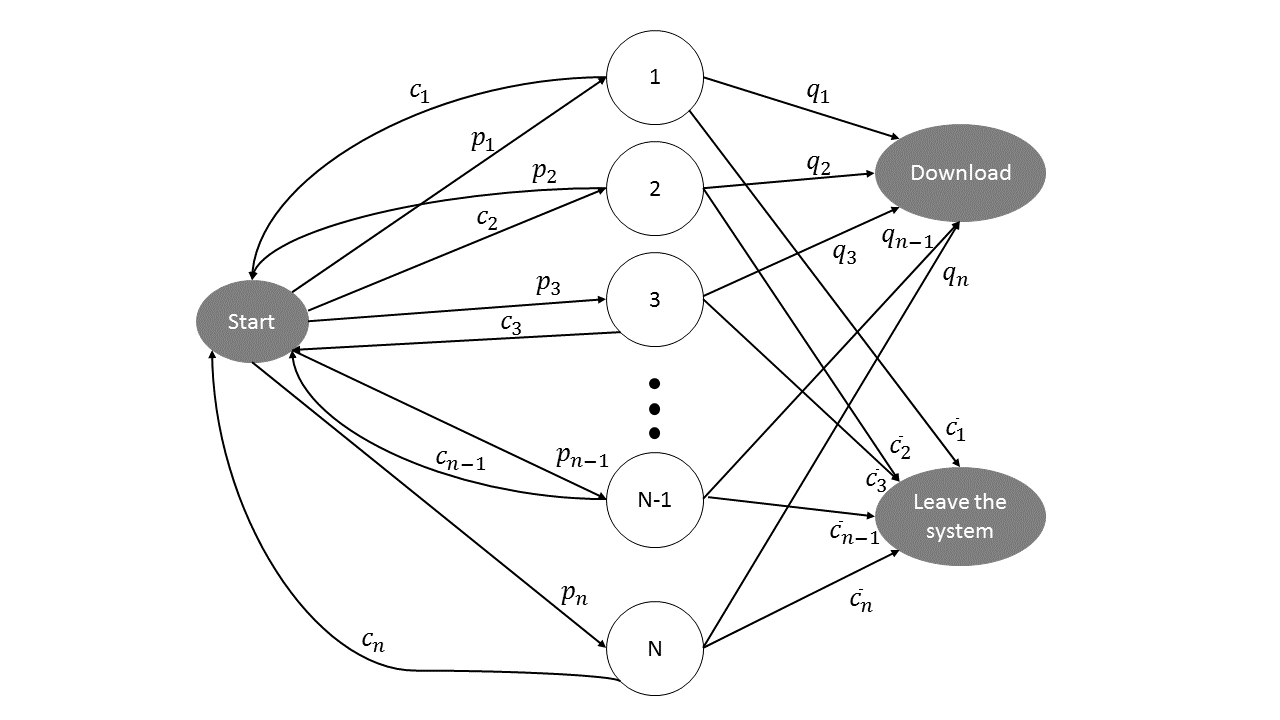}
	\caption{A Trial-Offer Market with Continuation.}
	\label{fig:journey}
\end{figure}

The main goal of this paper is to study trial-offer markets with
continuation, i.e., a setting where market participants can continue
shopping even when they decline to purchase the product just sampled.
We model such a trial-offer market by adding a continuation 
probability
\begin{equation}\label{cont_p}
c_{i}=f(\cdot)(1-q_{i})
\end{equation}
to continue shopping after a participant has declined to purchase
product $i$. In the above probability, the $(1-q_i)$ term represents
the fact that the participant has declined to purchase product $i$ and
the $f(\cdot)$ term represents a function that might depend on the
product quality, the current position, or even on another overall
measure (or a combination of all these factors). Figure
\ref{fig:journey} shows a graphic representation of a trial-offer
market with continuation. It uses $\overline{c}_{i}=1-c_{i}$ to denote
the probability that a participant leaves the market place after
sampling product $i$.

The expected number of purchases in the static version of the
trial-offer market with continuation for a ranking $\sigma$ is denoted
by $\overline{\lambda(\sigma)}$ and defined by

\begin{equation}\label{lambda_j}
\overline{\lambda(\sigma)}=\sum_{i=1}^{n}p_{i}(\sigma)(q_{i}+c_{i}\overline{\lambda(\sigma)})
\end{equation}

\noindent
Our primary objective is to maximize market efficiency, i.e., the
expected purchases:
\begin{equation}\label{cascade-obj}
\sigma^* = \argmax_{\sigma \in S_n} \overline{\lambda(\sigma)}.
\end{equation}
Note that the higher this objective is, the lower the probability that
consumers try a product but then decide not to purchase it. Hence, if
we interpret this last action as an inefficiency, maximising the
expected efficiency of the market minimizes unproductive trials.

\begin{figure}[t]
	\begin{centering}
		\includegraphics[width=0.6\linewidth]{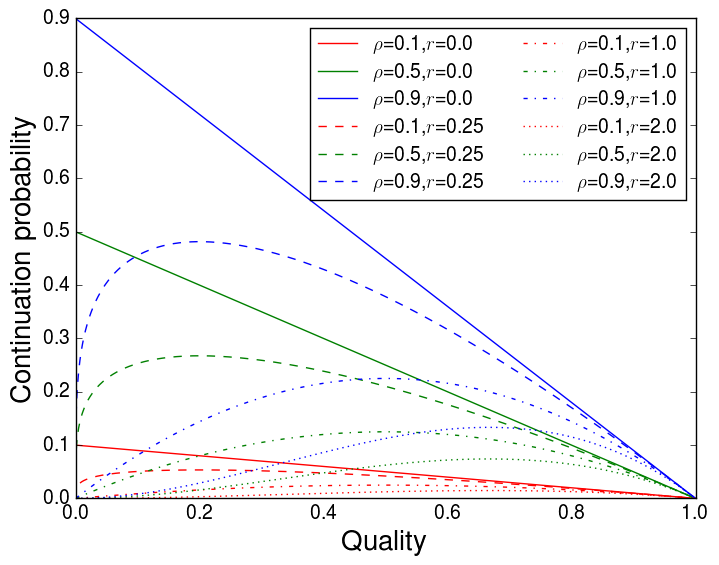}
	\end{centering}
	\centering{}
	\caption{Examples of the continuation probabilities for different values of $\rho$ and $r$; the $r$ parameter defines where the peak is (the maximum is always attained at $q=\frac{r}{r+1}$), and $\rho$ modulates how strong the continuation is.}
	\label{im:cont_prob}
\end{figure}

This paper proves a number of results when the continuation $c_i$ 
depends polynomially on $q_i$, i.e.,
\begin{equation}\label{ci_f}
c_{i}=\rho q_{i}^r(1-q_{i})
\end{equation}
where $\rho \leq 1$ controls the overall tendency to continuation and
$r\geq 0$ represents the influence of $q$. This choice is justified
intuitively by the fact that a market participant is more likely to
continue sampling if the product she tried is of high quality, because
it reflects on how good the other products potentially are.  Figure
\ref{im:cont_prob} depicts various choices of $\rho$ and $r$.

\section{Reduction to the Trial-Offer Model}
\label{section:relation}

This section shows that the trial-offer market with continuation can
be reduced to a trial-offer market. Indeed, rearranging the terms in
Equation \ref{lambda_j} leads to
\begin{align}\label{closed_form_deduction}
\overline{\lambda(\sigma)}&=\sum_{i=1}^{n}p_{i}(\sigma)(q_{i}+c_{i}\overline{\lambda(\sigma)})\nonumber\\
\overline{\lambda(\sigma)}&=\frac{\sum_{i=1}^{n}p_{i}(\sigma)q_{i}}{1-\sum_{i=1}^{n}p_{i}(\sigma)c_{i}} \nonumber
\end{align}

\noindent
Defining
\begin{equation}
\label{eq:pibar}
\overline{p_{i}(\sigma)}=\frac{p_{i}(\sigma)}{1-\sum_{i=1}^{n}p_{i}(\sigma)c_{i}} 
\end{equation}
we obtain 
\begin{align*}
& \overline{\lambda(\sigma)}=\sum_{i=1}^{n}\overline{p_{i}(\sigma)} \ q_{i}
\end{align*}
By definition of $p_{i}(\sigma)$, we have
\begin{align*}\label{p_cont1}
\overline{p_{i}(\sigma)}&=\frac{v_{\sigma_{i}}a_{i}}{\sum_{i=1}^{n}v_{\sigma_{i}}a_{i}}\cdot \frac{1}{1-\sum_{i=1}^{n}(c_{i}\cdot \frac{v_{\sigma_{i}}a_{i}}{\sum_{i=1}^{n}v_{\sigma_{i}}a_{i}})}\\
\overline{p_{i}(\sigma)}&=\frac{v_{\sigma_{i}}a_{i}}{\sum_{i=1}^{n}(1-c_{i})v_{\sigma_{i}}a_{i}}
\end{align*}
Now, by defining $\overline{a_{i}}=a_{i}(1-c_{i})$ and $\overline{q_{i}}=\frac{q_{i}}{(1-c_{i})}$,
we obtain 
\begin{align}
&\overline{\lambda(\sigma)}=\sum_{i=1}^{n}\frac{v_{\sigma_{i}}\overline{a_{i}} \; \overline{q_{i}}}{\sum_{i=1}^{n}v_{\sigma_{i}}\overline{a_{i}}}\nonumber 
\end{align}

\noindent
We have proven the following theorem:
\begin{thm}
	\label{reduction_thm_cont}
	A trial-offer market with continuation can be reduced to a trial-offer
	market by using the product qualities $\overline{q_{i}}$ and appeals
	$\overline{a_{i}}$ defined as follows:
	\begin{align}
	&\overline{q_{i}}=\frac{q_i}{1-c_i}\nonumber \\
	&\overline{a_{i}}=a_i(1-c_i). \nonumber
	\end{align}
\end{thm}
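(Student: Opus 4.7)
The plan is to treat the recursive definition \eqref{lambda_j} as a scalar linear equation in the unknown $\overline{\lambda(\sigma)}$, solve it in closed form, and then recognize the result as the expected-purchases formula of a classical trial-offer market under a suitable reparameterization of appeals and qualities.

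First I would collect the $\overline{\lambda(\sigma)}$ terms on one side of \eqref{lambda_j}, factor $\overline{\lambda(\sigma)}$ out, and divide by $1 - \sum_i p_i(\sigma) c_i$, producing a closed-form ratio with numerator $\sum_i p_i(\sigma)\, q_i$ and denominator $1 - \sum_i p_i(\sigma)\, c_i$. Substituting the explicit logit expression $p_i(\sigma) = v_{\sigma_i} a_i / \sum_j v_{\sigma_j} a_j$ into both numerator and denominator and cancelling the common normalization factor $\sum_j v_{\sigma_j} a_j$ then leaves an expression of the form $\sum_i v_{\sigma_i} a_i q_i$ divided by $\sum_i (1-c_i)\, v_{\sigma_i} a_i$.

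At this stage the reparameterization in the theorem statement suggests itself almost automatically. Defining $\overline{a_i} = (1-c_i)\, a_i$ absorbs the $(1-c_i)$ factor into the denominator so that it takes the canonical trial-offer normalization $\sum_i v_{\sigma_i} \overline{a_i}$, while defining $\overline{q_i} = q_i / (1-c_i)$ leaves the numerator unchanged because $\overline{a_i}\,\overline{q_i} = a_i q_i$. After these substitutions, the expression for $\overline{\lambda(\sigma)}$ matches \eqref{exp_down} verbatim with the barred parameters, which is precisely the reduction asserted by the theorem.

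The only step requiring genuine care, and therefore the main potential obstacle, is the well-definedness of the division: I need $1 - \sum_i p_i(\sigma)\, c_i > 0$ so that the closed-form solution is valid, and I need $\overline{a_i}$ and $\overline{q_i}$ to lie in a sensible range. From \eqref{cont_p} we have $c_i = f(\cdot)(1-q_i) \le 1$ for each $i$, with strict inequality whenever $q_i > 0$; combined with $\sum_i p_i(\sigma) = 1$ this yields $\sum_i p_i(\sigma)\, c_i < 1$. Under the polynomial form \eqref{ci_f} with $\rho \le 1$ one further checks that $\overline{q_i} \in [0,1]$, so the reduced trial-offer market has legitimate parameters. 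Beyond this sanity check, the proof is a transparent algebraic rearrangement rather than a conceptually difficult argument.
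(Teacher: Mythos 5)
Your proposal follows essentially the same route as the paper's derivation: solve the recursion in Equation \eqref{lambda_j} as a linear equation in $\overline{\lambda(\sigma)}$, substitute the logit form of $p_i(\sigma)$, cancel the normalization, and read off $\overline{a_i}=a_i(1-c_i)$, $\overline{q_i}=q_i/(1-c_i)$. Your extra check that $\sum_i p_i(\sigma)c_i<1$ (so the division is legitimate) is a point the paper only uses implicitly, but it does not change the argument.
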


\begin{figure}[t]
	\begin{centering}
		\includegraphics[width=0.6\linewidth]{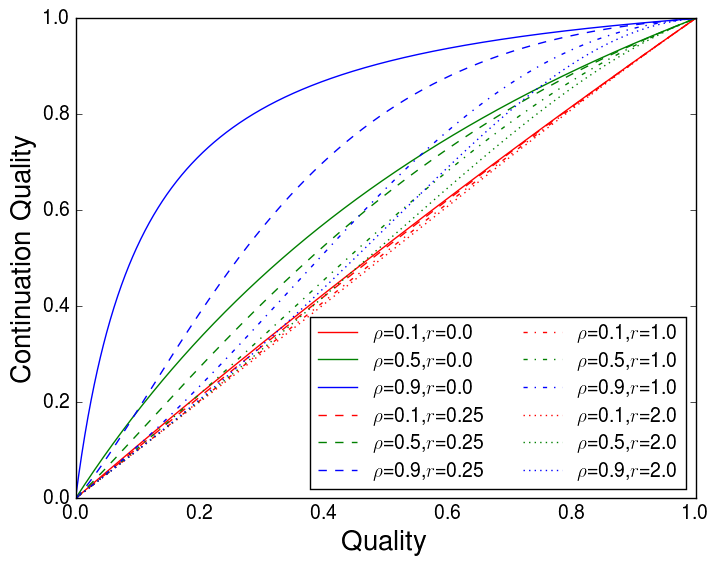}
	\end{centering}
	\centering{}
	\caption{Continuation qualities for different
		values of the $\rho$ and $r$ parameters; the lager $\rho$ is, the more {\em concave} the continuation
		quality becomes.}
	\label{im:ass_q}
\end{figure}

\noindent
In the following, $\overline{q_{i}}$ and $\overline{a_{i}}$ are called
the {\em continuation qualities} and {\em continuation appeals}, and  
Figure \ref{im:ass_q} depicts the continuation quality for different
values of $\rho$ and $r$. Observe how the continuation model typically
boosts the quality of the products, sometimes substantially. To understand this reduction intuitively,
we can rewrite Equation \ref{eq:pibar} as:
\begin{align*}
\overline{p_{i}(\sigma)}&=p_{i}(\sigma)\cdot\ \sum_{j=1}^{\infty}(\sum_{i=1}^{n}p_{i}(\sigma)c_{i})^j \\
\end{align*}
The value $\overline{p_{i}(\sigma)}$ can thus be interpreted as the
probability of sampling product $i$ in any number of steps. The
rewriting uses the fact that $\sum_{i=1}^{n}p_{i}(\sigma)c_{i}<1$ to
obtain an infinite sum and the term
$(\sum_{i=1}^{n}p_{i}(\sigma)c_{i})^j$ captures all the possible ways
to sampling $i$ in $j$ steps.

\section{Properties of the Market}
\label{section:performance}

\paragraph{Market Efficiency:}The first result links the expected
number of purchases of the market with and without continuation under
the performance ranking.

\begin{thm}\label{bounds_thm}
	Let $\pi_{c}^{*}$ and $\pi^{*}$ be optimal permutations for the
	trial-offer markets with and without continuation. Then,
	\begin{equation}
	\lambda(\pi^{*}) \leq \overline{\lambda}(\pi_{c}^{*})\leq \frac{\lambda(\pi^{*})}{1-\max\limits_{i}c_{i}}. \nonumber
	\end{equation}
\end{thm}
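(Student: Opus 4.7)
The plan is to establish both inequalities directly from the closed form
$\overline{\lambda}(\sigma) = \sum_i p_i(\sigma)q_i \,/\, (1-\sum_i p_i(\sigma)c_i)$
derived in Section~\ref{section:relation}, combined with the optimality definitions of $\pi^*$ and $\pi_c^*$. The whole argument reduces to two pointwise comparisons between $\lambda$ and $\overline{\lambda}$ that hold for \emph{every} permutation $\sigma$, and then an appeal to optimality on each side.

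First I would record the two pointwise bounds. Since $c_i \in [0,1)$, the denominator $1-\sum_i p_i(\sigma)c_i$ lies in $(0,1]$, which immediately gives $\overline{\lambda}(\sigma)\ge \lambda(\sigma)$ for every $\sigma$. For the other direction, the fact that $(p_i(\sigma))_{i=1}^n$ is a probability distribution (so $\sum_i p_i(\sigma)=1$) yields $\sum_i p_i(\sigma)c_i \le \max_i c_i$, hence $1-\sum_i p_i(\sigma)c_i \ge 1-\max_i c_i$ and therefore
\[
\overline{\lambda}(\sigma) \;\le\; \frac{\lambda(\sigma)}{1-\max_i c_i}.
\]

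With these two pointwise bounds in hand, the theorem follows by a short chain. For the lower bound,
\[
\overline{\lambda}(\pi_c^*) \;\ge\; \overline{\lambda}(\pi^*) \;\ge\; \lambda(\pi^*),
\]
where the first inequality uses optimality of $\pi_c^*$ for $\overline{\lambda}$ and the second is the first pointwise bound applied at $\sigma=\pi^*$. For the upper bound,
\[
\overline{\lambda}(\pi_c^*) \;\le\; \frac{\lambda(\pi_c^*)}{1-\max_i c_i} \;\le\; \frac{\lambda(\pi^*)}{1-\max_i c_i},
\]
where the first inequality is the second pointwise bound at $\sigma=\pi_c^*$ and the second uses the optimality of $\pi^*$ for $\lambda$.

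There is no real obstacle: the proof is essentially algebraic. The only point that requires a line of justification is that $\sum_i p_i(\sigma)c_i < 1$, so that the closed form of $\overline{\lambda}(\sigma)$ is well-defined and positive; this is already ensured by $c_i<1$ together with $\sum_i p_i(\sigma)=1$, and is implicit in the derivation of \eqref{eq:pibar}.
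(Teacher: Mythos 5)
Your proof is correct and follows essentially the same route as the paper: both arguments rest on the pointwise comparison $\lambda(\sigma)\le\overline{\lambda}(\sigma)\le\lambda(\sigma)/(1-\max_i c_i)$, obtained from the closed form with denominator $1-\sum_i p_i(\sigma)c_i\in(1-\max_i c_i,\,1]$, followed by the optimality of $\pi_c^*$ for $\overline{\lambda}$ (lower bound) and of $\pi^*$ for $\lambda$ (upper bound). Your presentation simply makes the two pointwise bounds explicit before chaining them, which is a slightly cleaner organization of the paper's identical argument.
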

\begin{proof}
	The lower bound can be derived as follows:
	\begin{align*}
	\lambda(\pi^{*})&= \frac{\sum_{i=1}^{n}v_{i}a_{\pi^{*}_{i}}q_{\pi^{*}_{i}}}{\sum_{i=1}^{n}v_{i}a_{\pi^{*}_{i}}}\\
	\lambda(\pi^{*})&=\frac{\sum_{i=1}^{n}v_{i}a_{\pi^{*}_{i}}q_{\pi^{*}_{i}}}{\sum_{i=1}^{n}v_{i}a_{\pi^{*}_{i}}} \cdot \frac{1-\sum_{i=1}^{n}p_{i}(\pi^{*})c_{\pi^{*}_{i}}}{1-\sum_{i=1}^{n}p_{i}(\pi^{*})c_{\pi^{*}_{i}}}\\
	\lambda(\pi^{*})&=\underbrace{\frac{\sum_{i=1}^{n}p_{i}(\pi^{*})q_{\pi^{*}_{i}}}{1-\sum_{i=1}^{n}p_{i}(\pi^{*})c_{\pi^{*}_{i}}}}_{\overline{\lambda}(\pi^{*})}\underbrace{(1-\sum_{i=1}^{n}p_{i}(\pi^{*})c_{\pi^{*}_{i}})}_{\leq 1}\\
	\lambda(\pi^{*})&\leq\overline{\lambda}(\pi^{*}) \leq\overline{\lambda}(\pi_{c}^{*})
	\end{align*}
	where the last inequality holds because of optimality of $\pi_{c}^{*}$
	for $\overline{\lambda}(\cdot)$. The upper bound follows from 
	\[
	\overline{\lambda}(\pi_{c}^{*}) \leq \lambda(\pi_{c}^{*})\cdot \frac{1}{1-\sum_{i=1}^{n}p_{i}c_{\pi^{*}_{i}}} \leq \lambda(\pi^{*})\cdot \frac{1}{1-\max\limits_{i}c_{i}}.
	\]
\end{proof}

\noindent 
The following corollary considers the case where continuations depend polynomially on qualities (a proof is provided in the \nameref{appendix}). 
\begin{cor}\label{Opt_Bounds}
	Assume that $c_i=\rho q_i^r (1-q_i)$. It follows that 
	\begin{equation}\label{bound_opts_propq}
	\lambda(\pi^{*}) \leq \overline{\lambda}(\pi_{c}^{*})\leq \lambda(\pi^{*})\frac{1}{1-\frac{\rho r^r}{(r+1)^{r+1}}}
	\end{equation}
\end{cor}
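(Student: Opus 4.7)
The plan is to observe that this corollary is essentially a direct consequence of Theorem \ref{bounds_thm} together with a one-variable calculus optimization. By Theorem \ref{bounds_thm}, the lower bound $\lambda(\pi^{*}) \leq \overline{\lambda}(\pi_{c}^{*})$ is immediate and carries over unchanged. For the upper bound, it suffices to show that when $c_i = \rho q_i^r(1-q_i)$, we have
\[
\max_{i} c_i \;\leq\; \frac{\rho r^r}{(r+1)^{r+1}},
\]
since plugging this into $\overline{\lambda}(\pi_{c}^{*}) \leq \lambda(\pi^{*})/(1-\max_i c_i)$ yields exactly the desired bound.

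To verify the inequality on $\max_i c_i$, I would treat $c(q) = \rho q^r(1-q)$ as a function of a real variable $q \in [0,1]$ and maximize it. Differentiating gives $c'(q) = \rho q^{r-1}[r - (r+1)q]$, which vanishes at $q^* = r/(r+1)$ (the boundary points $q=0$ and $q=1$ give $c=0$, so the interior critical point is the maximizer). Substituting back yields
\[
c(q^*) = \rho \left(\frac{r}{r+1}\right)^{r}\!\left(\frac{1}{r+1}\right) = \frac{\rho r^r}{(r+1)^{r+1}},
\]
which is an upper bound on $c_i$ for every $i$, regardless of the actual quality values.

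Combining the two ingredients then gives
\[
\overline{\lambda}(\pi_{c}^{*}) \;\leq\; \frac{\lambda(\pi^{*})}{1 - \max_i c_i} \;\leq\; \lambda(\pi^{*}) \cdot \frac{1}{1 - \frac{\rho r^r}{(r+1)^{r+1}}},
\]
which is what we want. There is no real obstacle here: the only mild subtleties are the boundary case $r=0$ (where $0^0$ is interpreted as $1$, consistent with $c = \rho(1-q)$ attaining maximum $\rho$ at $q=0$) and the implicit assumption $\rho r^r/(r+1)^{r+1} < 1$, which is guaranteed by $\rho \leq 1$ since $r^r/(r+1)^{r+1} \leq 1/(r+1) \leq 1$ for all $r \geq 0$, so that the denominator is strictly positive and the bound is meaningful.
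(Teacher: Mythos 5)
Your proof is correct and follows essentially the same route as the paper: invoke Theorem \ref{bounds_thm} and bound $\max_i c_i$ by maximizing $\rho x^r(1-x)$ over $[0,1]$, with the maximum attained at $x=r/(r+1)$ giving $\rho r^r/(r+1)^{r+1}$. Your added remarks on the $r=0$ case and the positivity of the denominator are harmless extra care beyond what the paper records.
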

\proofatend 	
The proof follows from examining the maximum value for the $c_i$. 
\begin{equation*}\label{bound_opts_proof_pow}
\max\limits_{i}c_{i} = \max\limits_{i}\rho q_{i}^r(1-q_{i}) \leq \rho \max\limits_{x \in [0,1]}x^r(1-x) = \frac{\rho r^r}{(r+1)^{r+1}}
\end{equation*}
where the last equality holds because the maximum value of $x^r(1-x)$
is reached when $x=\frac{r}{r+1}$.
\endproofatend

\noindent
When $\rho= 1$ and $r=1$, $\lambda(\pi_{c}^{*}) \leq
\frac{4}{3}\cdot\lambda(\pi^{*}) $ indicating a market that is at most
33\% more efficient.

Prior work on trial-offer markets with the social influence signal
considered here has shown that the quality ranking is asymptotically
optimal \cite{QRANKING}: The market converges towards a monopoly for
the product of highest quality. We now show that, when the
continuations are polynomial in product qualities, the quality ranking
is preserved by the reduction and hence the two markets, with and
without continuation, converge to the same equilibrium in market
shares.

\begin{proposition}\label{monotony_Poly}
	Let $c_{i}=\rho q_{i}^r(1-q_{i})$ with $\rho \in (0,1)$ and
	$r\geq0$. Then $q_i\leq q_j \Leftrightarrow \overline{q_i}\leq
	\overline{q_j}$.
\end{proposition}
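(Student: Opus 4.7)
The plan is to observe that $\overline{q_i} = g(q_i)$, where $g:[0,1]\to\mathbb{R}$ is the scalar function
\[
g(q) \;=\; \frac{q}{1 - \rho q^r(1-q)},
\]
depending only on $q$ once $\rho$ and $r$ are fixed. The stated equivalence then reduces to showing that $g$ is strictly increasing on $[0,1]$: strict monotonicity yields $q_i \leq q_j \Rightarrow \overline{q_i} \leq \overline{q_j}$ directly, and the reverse implication by contrapositive.

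First I would check that the denominator $1 - \rho q^r(1-q)$ is strictly positive on $[0,1]$, which follows from $\rho < 1$ and $q^r(1-q) \leq 1$. Differentiating by the quotient rule (using $c'(q) = \rho q^{r-1}[r - (r+1)q]$) and collecting terms, I expect the numerator to simplify into the form
\[
g'(q) \;=\; \frac{1 - \rho\, q^r\bigl(1 - r(1-q)\bigr)}{\bigl(1 - \rho q^r(1-q)\bigr)^2}.
\]
Since the denominator is positive, positivity of $g'$ becomes equivalent to the estimate $\rho q^r\bigl(1 - r(1-q)\bigr) < 1$ for all $q \in [0,1]$.

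For this last estimate I would use a short case split on the sign of the bracket $1 - r(1-q)$. If $1 - r(1-q) \leq 0$, the whole product is non-positive and certainly less than $1$. Otherwise $1 - r(1-q) \in (0,1]$, and combined with $q^r \in [0,1]$ the product is at most $\rho$, which is strictly less than $1$ by hypothesis. In either case the numerator of $g'(q)$ is bounded below by $1 - \rho > 0$, so $g$ is strictly increasing and the equivalence follows.

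The main obstacle is the derivative simplification into the displayed form: it is elementary but requires careful bookkeeping of the $q^{r-1}$ factor coming from $c'(q)$. Once the numerator has the stated shape, the case split on the sign of $1 - r(1-q)$ closes the argument immediately.
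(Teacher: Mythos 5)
Your proposal is correct and follows essentially the same route as the paper: reduce the equivalence to monotonicity of the scalar map $q \mapsto q/(1-\rho q^r(1-q))$, differentiate, and bound the numerator of the derivative (your $1-\rho q^r(1-r(1-q))$ is exactly the paper's $1+\rho q^{r}[(r-1)-rq]$) below by $1-\rho>0$. Your case split on the sign of $1-r(1-q)$ is merely a slightly more careful justification of that lower bound than the paper's endpoint argument, so the substance is identical.
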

\begin{proof}
	It is sufficient to show that $\overline{q_i}$, when viewed as a
	function of $q_i$, is increasing in $(0,1)$. Consider such function
	\[h(x)=\frac{x}{1-\rho x^r(1-x)}\]
	and its derivative
	\[
	\frac{dh(x)}{dx}=\frac{\rho x^{r}\left[(r-1)-rx\right]+1}{((1-\rho x^r(1-x))^2}.
	\]
	The denominator is greater than zero, so it remains to show that the
	numerator also is. The term $\rho x^{r}$ is increasing in $x$ and the
	term $\left[(r-1)-rx\right]$ is a line decreasing in $x$.  The product
	is minimized when $x=1$, in which case the product has a value of
	$-\rho$. Since $\rho \in (0,1)$, the minimum value of the numerator is
	$1 -\rho \geq 0$, which concludes the proof.
\end{proof}

\noindent
More importantly, it is also possible to show that, under the quality
ranking, the probability that the next purchase is product $i$ is the
same in the markets with and without continuation. Hence, from a
product standpoint, the markets behave very similarly.

\begin{proposition}
	\label{p_i}
	The probability $p_i$ that the next purchase (after any number of
	steps) is product $i$ is
	\begin{equation*}
	p_i=\frac{v_i a_i q_i}{\sum\limits_{j=1}^n v_j a_j q_j}.
	\end{equation*}
\end{proposition}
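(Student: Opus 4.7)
The plan is to derive $p_i$ in two steps: first compute the unconditional probability $P_i$ that a participant eventually purchases product $i$, then condition on a purchase occurring. Because a participant who buys leaves the market, each participant purchases at most once, so the events ``purchase is $i$'' are disjoint and $p_i = P_i/\sum_j P_j$. I would set up a first-step recursion by partitioning on the first product $k$ sampled: with probability $p_k(\sigma)$ product $k$ is tried; with probability $q_k$ it is purchased (contributing only to $P_k$); with probability $c_k$ the participant declines and continues, after which---since no state variable has been altered---the future is distributed exactly like that of a fresh entrant. This yields
\[
P_i = p_i(\sigma)\, q_i + \Bigl(\sum_{k=1}^{n} p_k(\sigma)\, c_k\Bigr) P_i,
\]
hence $P_i = p_i(\sigma)\, q_i / \bigl(1 - \sum_k p_k(\sigma)\, c_k\bigr)$.

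Next I would observe that summing over $i$ recovers $\sum_i P_i = \overline{\lambda(\sigma)}$, in agreement with the closed form already derived in Section~\ref{section:relation} and reflecting the fact that the probability of any purchase equals the expected number of purchases when each participant buys at most once. Conditioning then gives
\[
p_i = \frac{P_i}{\overline{\lambda(\sigma)}} = \frac{p_i(\sigma)\, q_i}{\sum_j p_j(\sigma)\, q_j},
\]
with the factor $1-\sum_k p_k(\sigma) c_k$ cancelling. Substituting $p_i(\sigma) = v_{\sigma_i} a_i / \sum_\ell v_{\sigma_\ell} a_\ell$ cancels the common MNL denominator, producing $p_i = v_{\sigma_i} a_i q_i / \sum_j v_{\sigma_j} a_j q_j$. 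Invoking the quality-ranking assumption inherited from the surrounding discussion, $\sigma_i = i$, so $v_{\sigma_i} = v_i$ and the stated formula follows.

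A cleaner route is via Theorem~\ref{reduction_thm_cont}: in the equivalent continuation-free market the standard MNL conditional-purchase probability is $\bar p_i(\sigma) \bar q_i / \sum_j \bar p_j(\sigma) \bar q_j$, and the identity $\bar a_i\, \bar q_i = a_i(1-c_i)\cdot q_i/(1-c_i) = a_i q_i$ makes every $(1-c_i)$ disappear, producing the same formula in one line. I would include this as a remark since it simultaneously explains why the continuation and non-continuation markets exhibit the same per-purchase product distribution, as promised in the paragraph preceding the proposition.

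The only subtle point is justifying the self-referential recursion for $P_i$: the essential observation is that after a continuation no quantity appearing in the model (visibilities, appeals, qualities, continuation probabilities) has changed, so the conditional distribution of the eventual purchase given continuation equals the unconditional distribution $P_\cdot$. Once this memorylessness is granted, everything else is routine algebra that simplifies cleanly using the reduction identities $\bar a_i = a_i(1-c_i)$ and $\bar q_i = q_i/(1-c_i)$.
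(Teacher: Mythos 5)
Your proposal is correct, and it rests on the same renewal logic as the paper's proof, just organized differently. The paper enumerates explicitly the step $m$ at which the purchase occurs: it writes the probability that product $i$ is bought at step $m$ with no earlier purchase as $(1-\beta)^{m-1}\frac{v_i a_i}{\sum_j v_j \overline{a_j}}\,q_i$ with $\beta=\frac{\sum_j v_j a_j q_j}{\sum_j v_j \overline{a_j}}$, and sums the geometric series, working throughout with the reduced quantities $\overline{a_j},\overline{q_j}$. You instead set up a first-trial fixed-point recursion in the original parameters, $P_i = p_i(\sigma)q_i + \bigl(\sum_k p_k(\sigma)c_k\bigr)P_i$, and then normalize by $\sum_j P_j=\overline{\lambda(\sigma)}$; unfolding your recursion reproduces the paper's geometric series (one can check that $1-\beta$ equals the session-level no-purchase probability $1-\sum_j P_j$), so the two arguments are equivalent, and your memorylessness remark plays the role of the tacit independence the paper invokes when raising the no-purchase probability to the power $m-1$. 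Two small observations: your alternative route via Theorem \ref{reduction_thm_cont} and the identity $\overline{a_i}\,\overline{q_i}=a_i q_i$ is a genuine shortcut the paper does not spell out, and it is the cleanest explanation of why the per-purchase distribution is continuation-invariant; on the other hand, when you pass from ``a given participant's purchase is $i$'' to ``the next purchase is $i$'' you should say explicitly that a participant who leaves without buying does not alter the social signal $d$, so the next entrant faces an identical market and the conditional distribution is unchanged---the same point the paper leaves implicit.
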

\proofatend
The probability that product $i$ is purchased in the first step is 
\begin{equation*}
p_i^{1st}= \frac{v_i a_i}{\sum\limits_{j=1}^n v_j \overline{a_j}} q_i,
\end{equation*}
More generally, the probability that product $i$ is purchased in step
$m$ while no product was purchased in earlier steps is:
\begin{equation*}
\label{noarreg}
p_i^{mth}=\left( \frac{\sum\limits_{j=1}^n v_j \overline{a_j} (1-\overline{q_j})}{\sum\limits_{j=1}^n v_j \overline{a_j}}\right)^{m-1} \frac{v_i a_i}{\sum\limits_{j=1}^n v_j \overline{a_j}} q_i.
\end{equation*}
Defining $\beta=(\sum\limits_{j=1}^n v_j a_j q_j)/(\sum\limits_{j=1}^n v_j \overline{a_j})$, Equation \ref{noarreg} becomes
\begin{equation*}
p_i^{mth}=\bigg(1-\beta\bigg)^{m-1}  \frac{v_i a_i}{\sum\limits_{j=1}^nv_j \overline{a_j}} q_i.
\end{equation*}
Hence the probability that the next purchased product is product $i$ is given by
\begin{equation*}
p_i=  \sum\limits_{m=0}^\infty\bigg(1-\beta\bigg)^m \frac{v_i a_i}{\sum\limits_{j=1}^n v_j \overline{a_j}} q_i.
\end{equation*}
Given the fact that $\beta<1$ we have:
\begin{equation*}
\sum\limits_{m=0}^\infty\bigg(1-\beta\bigg)^m=\frac{1}{\beta},
\end{equation*}
the probability that the next purchase is product $i$ is given by
\begin{align*}
p_i&=\frac{\sum\limits_{j=1}^n v_j \overline{a_j}}{\sum\limits_{j=1}^n v_j a_j q_j}\cdot \frac{v_i a_i}{\sum\limits_{j=1}^n v_j \overline{a_j}} q_i.\\
p_i&=\frac{v_ia_iq_i}{\sum\limits_{j=1}^n v_ja_jq_j}
\end{align*}
\endproofatend

A proof of Propositon \ref{p_i} is given in the \nameref{appendix}). In contrast, the same results do not hold for the performance ranking,
which may change when a continuation is used, as shown by the
following example.

\begin{example}
	Consider the following instance with 3 songs:
	\begin{itemize}
		\item Visibilities: $v_1=0.8$, $v_2=0.5$ and $v_3=0.1$
		\item Qualities: $q_1=0.9$, $q_2=0.2$ and $q_3=0.6$
		\item Appeals: $a_1=0.9$, $a_2=0.1$ and $a_3=0.3$
		\item Continuation parameters: $\rho=0.8$ and $r=0.7$
	\end{itemize}
	In this case, the performance ranking for the market without
	continuation is $\sigma^*=[1,2,3]$; It is $\sigma^*_c=[1,3,2]$
	for the continuation
	model. 
\end{example}

\paragraph{Position Bias:}This result generalizes the result shown in 
\cite{QRANKING}, to the continuation setting, and it means that we can always benefit 
from position bias. The formalization of this claim can be seen below (a proof is provided in the \nameref{appendix}))

\begin{thm}
	\label{thm:q-position-bias}
	Position bias increases the expected number of purchases under the
	quality-ranking policy, i.e., for all visibilities $v_i$, appeals
	$a_i$, qualities $q_i$ $(1 \leq i \leq n)$ and continuation probabilities $c_i$.This is,  after we make the reduction to the Associated Multinomial Logit, we have:
	\[
	\frac{\sum_{i}v_{i}\overline{a_{i}} \ \overline{q_{i}}}{\sum_{j}v_{j}\overline{a_{j}}}\geq\frac{\sum_{i}\overline{a_{i}}\ \overline{q_{i}}}{\sum_{j}\overline{a_{j}}}.
	\]
\end{thm}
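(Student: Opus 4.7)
The plan is to recognize that the inequality is a (weighted) Chebyshev sum inequality in disguise, once we verify that under the quality-ranking policy the visibilities $v_i$ and the continuation qualities $\overline{q_i}$ are similarly ordered.

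First I would set up the sortings. By the standing assumption $v_1\ge v_2\ge\cdots\ge v_n$ and $q_1\ge q_2\ge\cdots\ge q_n$, and by the quality ranking product $i$ sits in position $i$. Proposition \ref{monotony_Poly} then tells me that $\overline{q_i}=q_i/(1-c_i)$ is monotone in $q_i$, so $\overline{q_1}\ge\overline{q_2}\ge\cdots\ge\overline{q_n}$ as well. Thus the two sequences $(v_i)$ and $(\overline{q_i})$ are similarly ordered, while $\overline{a_i}>0$ can be used as nonnegative weights.

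Next I would clear denominators. The inequality to prove is equivalent to
\[
\Bigl(\sum_{i}\overline{a_i}\Bigr)\Bigl(\sum_{i}\overline{a_i}\,v_i\,\overline{q_i}\Bigr)\ \ge\ \Bigl(\sum_{i}\overline{a_i}\,v_i\Bigr)\Bigl(\sum_{i}\overline{a_i}\,\overline{q_i}\Bigr),
\]
which is precisely the weighted Chebyshev sum inequality for the weights $w_i=\overline{a_i}$ applied to the similarly-ordered sequences $(v_i)$ and $(\overline{q_i})$. The standard one-line proof is to observe
\[
\sum_{i,j}\overline{a_i}\,\overline{a_j}\,(v_i-v_j)(\overline{q_i}-\overline{q_j})\ \ge\ 0,
\]
which holds termwise because $(v_i-v_j)$ and $(\overline{q_i}-\overline{q_j})$ share the same sign. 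Expanding the product and collecting the four resulting sums yields exactly twice the difference of the two sides above, and the inequality follows.

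I do not expect a real obstacle: the only nontrivial ingredient is the monotonicity of $\overline{q_i}$ in $q_i$, which is already provided by Proposition \ref{monotony_Poly}. The rest is a direct application of Chebyshev's sum inequality with positive weights $\overline{a_i}$. The only thing to be careful about is to state explicitly that we are in the quality-ranking regime (so that the indices of $v$ and $\overline{q}$ line up), since without this co-monotonicity the inequality can fail.
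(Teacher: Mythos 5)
Your proof is correct, but it follows a different route from the paper's. You clear denominators and invoke the weighted Chebyshev sum inequality, reducing everything to the symmetric double sum $\sum_{i,j}\overline{a_i}\,\overline{a_j}\,(v_i-v_j)(\overline{q_i}-\overline{q_j})\geq 0$, which holds termwise because $(v_i)$ and $(\overline{q_i})$ are similarly ordered; the expansion indeed yields exactly twice the difference of the two cross products, so the argument is complete. The paper instead uses a mean-crossing (pivot) argument: it sets $\overline{\lambda}=\sum_i v_i\overline{a_i}\,\overline{q_i}/\sum_j v_j\overline{a_j}$, notes $\sum_i v_i\overline{a_i}(\overline{q_i}-\overline{\lambda})=0$, chooses the index $k$ where $\overline{q_i}-\overline{\lambda}$ changes sign, and replaces each $v_i$ by $v_k$ (valid because the nonnegative terms have $v_i\geq v_k$ and the negative ones $v_i\leq v_k$) to conclude $\sum_i\overline{a_i}(\overline{q_i}-\overline{\lambda})\leq 0$, i.e.\ the claim. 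Both arguments hinge on the same co-monotonicity of $(v_i)$ and $(\overline{q_i})$: you state it explicitly and justify it via Proposition \ref{monotony_Poly} (hence implicitly restricting to the polynomial continuation $c_i=\rho q_i^r(1-q_i)$, even though the theorem is phrased for arbitrary $c_i$), whereas the paper uses it silently when it assumes a single crossing index $k$ exists. Your route is arguably cleaner and makes the needed hypothesis transparent; the paper's pivot technique has the advantage of being the same machinery it reuses in the subsequent social-influence theorem.
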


\proofatend
Let $\overline{\lambda} = \frac{\sum_{i}v_{i}\overline{a_{i}}\overline{q_{i}}}{\sum_{j}v_{j}\overline{a_{j}}}$
be the expected number of purchases for the quality ranking. We have
\[
\sum_{i}v_{i}\overline{a_{i}}\left(\overline{q_{i}}-\overline{\lambda}\right)=0.
\]
Consider the index $k$ such that $\left(\overline{q_{k}}-\overline{\lambda}\right)\geq0$ and
$\left(\overline{q_{k+1}}-\lambda\right)<0$. Since $v_{1} \geq \ldots \geq v_n$, we have
\[
\sum_{i=1}^{k}v_{k}\overline{a_{i}}\left(\overline{q_{i}}-\overline{\lambda}\right)+\sum_{i=k+1}^{n}v_{k}\overline{a_{i}}\left(\overline{q_{i}}-\overline{\lambda}\right) \leq \sum_{i}v_{i}\overline{a_{i}}\left(\overline{q_{i}}-\overline{\lambda}\right) = 0
\]
and, given the fact that $v_k\geq 0$,
\[
\sum_{i=1}^{n}\overline{a_{i}}\left(\overline{q_{i}}-\overline{\lambda}\right) \leq 0.
\]
We had the desired result:
$
\lambda \geq \frac{\sum_{i=1}^{n}\overline{a_{i}}\overline{q_{i}}}{\sum_{i=1}^{n}\overline{a_{i}}}.
$
\endproofatend

\paragraph{Social Influence:}The last result in this section shows
that the social influence signals always benefit trial-offer markets
with continuation. The result is independent of the structure of
the continuation probabilities. The proof is a generalization 
of the result in \cite{QRANKING}. 

\begin{thm}
	The expected marginal rate of purchases is non-decreasing over time for the
	quality ranking under social influence in trial-offer markets with
	continuation.
\end{thm}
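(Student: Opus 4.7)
The strategy is to exploit Theorem \ref{reduction_thm_cont} to write $\overline{\lambda}$ in the multinomial-logit form $\overline{\lambda}=N/D$ with $N=\sum_k v_k\overline{a_k}\,\overline{q_k}=\sum_k v_k a_k q_k$ and $D=\sum_k v_k\overline{a_k}=\sum_k v_k a_k(1-c_k)$, and then to mimic the rearrangement argument used in the proof of Theorem \ref{thm:q-position-bias} (which is itself the continuation-free analogue in \cite{QRANKING}). The quality ranking is the standing one: products are ordered so that both $v_k$ and $q_k$ are non-increasing in $k$, and (by Proposition \ref{monotony_Poly} whenever it applies) so is $\overline{q_k}$.

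The first step is to compute the one-step expected change in $\overline{\lambda}$. An arriving participant makes at most one purchase; by Proposition \ref{p_i}, the unconditional probability that this purchase is product $k$ equals $\mu_k=v_k a_k q_k/D$, and with probability $1-\overline{\lambda}$ the participant leaves without purchasing and $\overline{\lambda}$ is unchanged. A purchase of $k$ increments $a_k$ by one, updating $N\to N+v_k q_k$ and $D\to D+v_k(1-c_k)$, which after a short rearrangement gives
\[
\Delta_k \;=\; \frac{N+v_k q_k}{D+v_k(1-c_k)}-\frac{N}{D} \;=\; \frac{v_k(1-c_k)\bigl(\overline{q_k}-\overline{\lambda}\bigr)}{D+v_k(1-c_k)}.
\]
Combining and using the identity $a_k q_k(1-c_k)=\overline{a_k}\,q_k$ yields
\[
E\bigl[\overline{\lambda}_{t+1}\bigr]-\overline{\lambda}_t \;=\; \sum_k \mu_k\,\Delta_k \;=\; \sum_k v_k\overline{a_k}\bigl(\overline{q_k}-\overline{\lambda}\bigr)\,\eta_k, \qquad \eta_k \;=\; \frac{v_k q_k}{D\,\bigl(D+v_k(1-c_k)\bigr)}.
\]

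The final step is a rearrangement identical in spirit to the one used in Theorem \ref{thm:q-position-bias}. The identity $\sum_k v_k\overline{a_k}(\overline{q_k}-\overline{\lambda})=0$ comes for free from the definition of $\overline{\lambda}$, and under the quality ranking the sequence $(\overline{q_k}-\overline{\lambda})$ switches sign at a unique threshold $k^\star$, being non-negative for $k\le k^\star$ and negative for $k>k^\star$. Once $\eta_k$ is shown to be non-increasing in $k$, the pairing trick delivers
\[
\sum_k v_k\overline{a_k}\bigl(\overline{q_k}-\overline{\lambda}\bigr)\eta_k \;\geq\; \eta_{k^\star}\sum_k v_k\overline{a_k}\bigl(\overline{q_k}-\overline{\lambda}\bigr) \;=\; 0,
\]
which is the theorem. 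The hard part is precisely this monotonicity check: the numerator $v_k q_k$ is a product of two non-increasing sequences and is thus non-increasing, but the denominator $D+v_k(1-c_k)$ involves $v_k(1-c_k)$, whose behaviour in $k$ is governed by the interplay between the visibilities and the continuation probabilities. This is the same structural subtlety already implicit in Proposition \ref{monotony_Poly} for preserving the quality ordering under the reduction, and the real work will be either (i) establishing strict monotonicity of $\eta_k$ in the appropriate regime, or (ii) weakening the requirement to the two-sided pairing inequality $\eta_k\ge\eta_{k^\star}$ for $k\le k^\star$ and $\eta_k\le\eta_{k^\star}$ for $k>k^\star$, which suffices for the rearrangement and is where the careful case analysis in $v_k$ and $c_k$ will concentrate.
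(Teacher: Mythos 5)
Your setup is correct and in fact lands exactly on the paper's proof obligation: the expected one-step change you compute, $\sum_j v_j\overline{a_j}\bigl(\overline{q_j}-\overline{\lambda}\bigr)\eta_j$ with $\eta_j=\frac{v_jq_j}{D\,(D+v_j(1-c_j))}$ and $D=\sum_i v_i\overline{a_i}$, is (up to the harmless factor $1/D$) precisely the quantity the paper must show nonnegative, and your reduction to a pairing inequality around the sign-change index is the right shape. (Minor nit: $\mu_k=v_ka_kq_k/D$ is not Proposition \ref{p_i}, which gives the next-purchase distribution normalized by $\sum_j v_ja_jq_j$; your $\mu_k$ is the per-participant purchase probability $\overline{p_k(\sigma)}\,\overline{q_k}$ coming from the reduction, Equation \ref{eq:pibar}.)

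The genuine gap is that you stop exactly where the work starts: you never establish either the monotonicity of $\eta_j$ or the two-sided threshold inequality, and you explicitly defer it as ``the real work.'' The paper does not prove global monotonicity of $\eta_j$ (which is neither needed nor obviously true, since $v_j(1-c_j)$ need not be monotone in $j$); instead it exploits the definition of the threshold index: $\overline{q_j}\ge\overline{\lambda}$ is equivalent to $1-c_j\le q_j/\overline{\lambda}$, and the reverse for $j>k$. Substituting $v_jq_j/\overline{\lambda}$ for $v_j(1-c_j)$ in the denominator (in the direction dictated by the sign of $\overline{q_j}-\overline{\lambda}$) removes the $c_j$-dependence entirely and yields
\[
\eta_j\cdot D \;\ge\; \frac{\overline{\lambda}\,v_kq_k}{\overline{\lambda}\,D+v_kq_k}\ \ (j\le k),
\qquad
\eta_j\cdot D \;\le\; \frac{\overline{\lambda}\,v_kq_k}{\overline{\lambda}\,D+v_kq_k}\ \ (j>k),
\]
where the comparison to the common threshold uses only that $v_jq_j$ is monotone along the quality ranking (this is where Proposition \ref{monotony_Poly} is invoked, so that ordering by $\overline{q}$ agrees with ordering by $q$) together with the elementary monotonicity of $x\mapsto x/(c+x)$. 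With this two-sided bound, $\sum_j v_j\overline{a_j}(\overline{q_j}-\overline{\lambda})\eta_j\ge \frac{\overline{\lambda}v_kq_k}{D(\overline{\lambda}D+v_kq_k)}\sum_j v_j\overline{a_j}(\overline{q_j}-\overline{\lambda})=0$. Until you supply this substitution-and-threshold argument (or some substitute for it), your proposal is an accurate reduction but not a proof.
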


\begin{proof}
	Let
	\begin{equation}\label{lambdaproof}
	\mathbb{E}[D_{t}]=\frac{\sum_{i}v_{i}\overline{a_{i}}\overline{q_{i}}}{\sum_{i}v_{i}\overline{a_{i}}}=\overline{\lambda}
	\end{equation}
	
	\noindent	
	be the expected number of purchases at time $t$. The expected number
	of purchases at time $t+1$ conditional to time $t$ is
	
	\begin{align*}
	&\mathbb{E}[D_{t+1}]=\sum_{j}\biggl[\frac{v_{j}\overline{a_{j}q_{j}}}{\sum v_{i}\overline{a_{i}}}\cdot\frac{\sum_{i\not=j}v_{i}\overline{a_{i}}\overline{q_{i}}+v_{j}(\overline{a_{j}}+1-c_j)\overline{q_{j}}}{\sum_{i\not=j}v_{i}\overline{a_{i}}+v_{j}(\overline{a_{j}}+1-c_j)}\biggr]\\
	&\quad \quad \quad \quad + \biggl[1-\frac{\sum_{i}v_{i}\overline{a_{i}q_{i}}}{\sum_{i}v_{i}\overline{a_{i}}}\biggr]\cdot\frac{\sum_{i}v_{i}\overline{a_{i}q_{i}}}{\sum_{i}v_{i}\overline{a_{i}}}\\
	&\quad \quad \quad \quad =\sum_{j}\biggl[\frac{v_{j}\overline{a_{j}q_{j}}}{\sum v_{i}\overline{a_{i}}}\cdot\frac{\sum_{i}v_{i}\overline{a_{i}q_{i}}+v_{j}(1-c_j)\overline{q_{j}}}{\sum_{i}v_{i}\overline{a_{i}}+v_{j}(1-c_j)}\biggr]\\
	&\quad \quad \quad \quad +\biggl[1-\frac{\sum_{j}v_{j}\overline{a_{j}q_{j}}}{\sum_{i}v_{i}\overline{a_{i}}}\biggr]\cdot\overline{\lambda}
	\end{align*}
	
	\noindent	
	We need to prove that
	\begin{equation}
	\mathbb{E}[D_{t+1}] \geq \mathbb{E}[D_{t}],\label{eq:1}
	\end{equation}
	which is equivalent to show, using Equation \ref{lambdaproof}, that
	\begin{equation*}
	\sum_{j}\biggl[\frac{v_{j}\overline{a_{j}q_{j}}}{\sum v_{i}\overline{a_{i}}}\cdot\frac{\sum_{i}v_{i}\overline{a_{i}q_{i}}+v_{j}(1-c_j)\overline{q_{j}}}{\sum_{i}v_{i}\overline{a_{i}}+v_{j}(1-c_j)}\biggr] +[1-\overline{\lambda}]\cdot\overline{\lambda}\geq\overline{\lambda}
	\end{equation*}
	
	\noindent
	Rearranging the terms, the proof obligation becomes 
	\[
	\frac{1}{\sum_{i}v_{i}\overline{a_{i}}}\sum_{j}\left[\frac{v_{j}^2\overline{a_{j}q_{j}}(1-c_j)}{\sum_{i}v_{i}\overline{a_{i}}+v_{j}(1-c_j)}\left(\overline{q_{j}}-\overline{\lambda}\right)\right] \geq 0
	\]
	or, equivalently,
	\begin{equation}\label{eq:condition1}
	\sum_{j}\left[\frac{v_{j}^2\overline{a_{j}q_{j}}(1-c_j)}{\sum_{i}v_{i}\overline{a_{i}}+v_{j}(1-c_j)}\left(\overline{q_{j}}-\overline{\lambda} \right)\right] \geq 0.
	\end{equation}
	
	\noindent
	Let $k=\max \{ i \in N | (\overline{q_i}-\lambda )\geq 0 \} $, i.e.,
	the largest $k\in N$ such that $q_k\geq \lambda$. By separating the
	sum into positive and negative terms, we obtain
	\begin{align*}
	&\sum_{j}\left[\frac{v_{j}^2\overline{a_{j}q_{j}}(1-c_j)\left(\overline{q_{j}}-\overline{\lambda} \right)}{\sum_{i}v_{i}\overline{a_{i}}+v_{j}(1-c_j)}\right]=S^+ + S^-\quad \text{ where}\\
	&S^+=\sum_{j=1}^k \biggl[\frac{v_{j}\overline{q_{j}}(1-c_j)}{\sum_{i}v_{i}\overline{a_{i}}+v_{j}(1-c_j)} \overline{a_{j}}v_{j}(\overline{q_{j}}-\overline{\lambda})\biggr],\\ &S^-=\sum_{j=k+1}^n\biggl[\frac{v_{j}\overline{q_{j}}(1-c_j)}{\sum_{i}v_{i}\overline{a_{i}}+v_{j}(1-c_j)}\overline{a_{j}}v_{j}(q_{j}-\overline{\lambda})\biggr]
	\end{align*}
	
	\noindent
	By definition of $k$, all the terms in $S^+$ are positive and the
	terms in $S^-$ are negative. Now, by definition of $k$ and $\overline{q_i}=\frac{q_i}{(1-c_i)}$, we have
	\begin{align}\label{q_boundsopt}
	&\forall i \leq k: \; (1-c_i) \leq \frac{q_i}{\overline{\lambda}},\nonumber\\
	&\forall i > k: \; (1-c_i) \geq \frac{q_i}{\overline{\lambda}}.
	\end{align}
	
	\noindent
	We now compute a lower bound for $S^+$ and $S^-$. For $S^+$, using
	Equation \ref{q_boundsopt} for $j\leq k$, we have
	
	\begin{align}\label{s_plus}
	&\frac{v_{j}\overline{q_{j}}(1-c_j)}{\sum_{i}v_{i}\overline{a_{i}}+v_{j}(1-c_j)}\leq \frac{v_{j}q_j}{\sum_{i}v_{i}\overline{a_{i}}+v_{j}\frac{q_j}{\overline{\lambda}}}\nonumber\\
	&\leq \overline{\lambda}\cdot \frac{v_j q_j}{\overline{\lambda}\sum_{i}v_{i}\overline{a_{i}}+v_{j}q_j}\leq \overline{\lambda}\cdot \frac{v_k q_k}{\overline{\lambda}\sum_{i}v_{i}\overline{a_{i}}+v_{k}q_k}
	\end{align}
	
	\noindent
	The last inequality follows by $v_i\geq v_k$ and $q_i\geq q_k$ (using Theorem \ref{monotony_Poly}) and the following property:
	For all $c>0$ and $x \geq y \geq0$, 
	\[
	\frac{x}{c+x}\geq\frac{y}{c+y}\Leftrightarrow(c+y)x\geq(c+x)y\Leftrightarrow cx\geq cy\Leftrightarrow x\geq y.
	\]
	For $S^-$, consider the following expression for $j>k$:
	\[
	\overline{\lambda}(\sum_{i}v_{i}\overline{a_{i}})\underbrace{[v_jq_j-v_kq_k]}_{\geq 0}+ v_kq_k\underbrace{[v_jq_j-\overline{\lambda}v_j(1-c_j)]}_{\geq 0} 
	\]
	Here the first term is greater or equal than zero because $v_j\geq
	v_k$ and $q_j\geq q_k$ using Theorem \ref{monotony_Poly} again. The
	second term is also greater than zero because it can be lower-bounded
	(using Equation \ref{q_boundsopt}) by:
	\[v_jq_j-\overline{\lambda}\frac{v_jq_j}{\overline{\lambda}} =0.\]
	Hence, 
	\begin{align}\label{s_minus}
	&\overline{\lambda}(\sum_{i}v_{i}\overline{a_{i}})[v_jq_j-v_kq_k]+ v_kq_k[v_jq_j-\overline{\lambda}v_j(1-c_j)]\geq 0 \nonumber\\
	&v_jq_j[\overline{\lambda}(\sum_{i}v_{i}\overline{a_{i}}) +v_kq_k] \geq \overline{\lambda}v_kq_k[\sum_{i}v_{i}\overline{a_{i}} + v_j(1-c_j)]\nonumber\\
	&\Leftrightarrow \frac{v_jq_j}{\sum_{i}v_{i}\overline{a_{i}} + v_j(1-c_j)}\geq\frac{\overline{\lambda}v_kq_k}{\overline{\lambda}\sum_{i}v_{i}\overline{a_{i}}+v_kq_k}
	\end{align}
	
	\noindent	
	Putting together Equations \ref{s_plus} and \ref{s_minus} gives us a lower bound to $S^++S^-$:
	\begin{equation}\label{lowerboundS}
	S^++S^-=\frac{\overline{\lambda}v_kq_k}{\overline{\lambda}\sum_{i}v_{i}\overline{a_{i}}+v_kq_k}\cdot \sum\limits_{i=1}^{n}v_i\overline{a_i}(\overline{q_i}-\overline{\lambda})
	\end{equation}
	
	\noindent	
	Now, by definition of $\overline{\lambda}$,
	\[
	\overline{\lambda} = \frac{\sum_{i=1}^n v_{i}\overline{a_{i}}\overline{q_{i}}}{\sum_{i=1}^n  v_{i}\overline{a_{i}}}
	\Leftrightarrow
	\sum_{i=1}^n  v_{i}\overline{a_{i}}(\overline{q_{i}}-\overline{\lambda}) = 0.
	\]
	which implies that
	\[
	\frac{\overline{\lambda}v_kq_k}{\overline{\lambda}\sum_{i}v_{i}\overline{a_{i}}+v_kq_k}\cdot \sum\limits_{i=1}^{n}v_i\overline{a_i}(\overline{q_i}-\overline{\lambda})=0
	\]
	concluding the proof.
\end{proof}

\paragraph{Relationship with the Cascade Model:}Observe that the
quality ranking over the continuation quality orders the products in
decreasing order of $\frac{q_{i}}{1-c_{i}}$ value, which is exactly the   
adjusted ecpm from \cite{Aggarwal_2008,Kempe_2008} with all the
revenues set to $1$. Obviously, the quality ranking (when the continuation probabilities preserve the quality rank in the continuation model), and hence the    
adjusted ecpm ranking, are not the best rankings to show to an incoming
participant (the performance ranking is), but our results show that 
they have nice asymptotic properties. 

\section{Experimental Results}
\label{section:experiments}

This section report computational results to highlight the theoretical
analysis. The computational results use settings that model the
\musiclab{} experiments discussed in
\cite{salganik2006experimental,krumme2012quantifying,PLOSONESI}.  As
mentioned in the introduction, \musiclab{} is a trial-offer market
where participants can try a song and then decide to download it. The
experiments use an agent-based simulation to emulate \musiclab{}.
Each simulation consists of $N$ steps and, at each iteration $t$,
\begin{enumerate}
	\item we simulate selecting a song $i$ according to the
	probabilities $p_i(\sigma,d)$, where $\sigma$ is the ranking
	proposed by the policy under evaluation and $d$ is the social
	influence signal.
	
	\item with probability $q_i$, the sampled song is downloaded, in
	which case the simulator increases the social influence signal
	for song $i$, i.e., $d_{i,t+1} = d_{i,t} + 1$. Otherwise,
	$d_{i,t+1} = d_{i,t}$, and if the continuation model is
	used, the simulation goes back to Step 1 with probability
	$c_i$ and advances to the next step otherwise.
\end{enumerate}

\noindent
Every $T$ iterations, a new list $\sigma$ is computed using one of the    
ranking policies. The experimental setting, which aims at being close
to the \musiclab{} experiments, considers 50 songs and simulations
with 20,000 steps. The songs are displayed in a single column. The
analysis in \cite{krumme2012quantifying} indicated that participants
are more likely to try songs higher in the list. More precisely, the
visibility decreases with the list position, except for a slight
increase at the bottom positions. This paper uses the first setting
for qualities, appeals and visibilities from \cite{PLOSONESI}, where   
the quality and the appeal are chosen independently according to a
Gaussian distribution normalized to fit between $0$ and $1$. In
addition, the experiments consider 12 different continuation
probabilities, varying $\rho$ and the power $r$ as shown in Figure
\ref{cont_p}. The results were obtained by averaging $W=100$
simulations.    

\begin{table}[H]
	\begin{center}
		\begin{footnotesize}
			\begin{tabular}{@{} lcccc @{}}
				\toprule
				Parameters   &   {\sc P-rank} & {\sc Q-rank} & {\sc D-rank} & {\sc R-rank} \\
				\midrule
				$\rho=0.1, r=0$  & 5.3\% & 4.9\% & 5.8\% & 7.5\%\\
				$\rho=0.1, r=0.25$  & 4.1\% & 4.3\% & 4.9\% & 5.2\%\\
				$\rho=0.1, r=1$  & 2.2\% & 2.4\% & 2.6\% & 2.5\%\\
				$\rho=0.1, r=2$  & 1.4\% & 1.4\% & 0.2\% & 1\%\\
				$\rho=0.5, r=0$  & 30.6\% & 31\% & 38.3\% & 51.8\%\\
				$\rho=0.5, r=0.25$  & 24.2\% & 24.6\% & 28.4\% & 33.6\%\\
				$\rho=0.5, r=1$  & 13.4\% & 13.2\% & 14.8\% & 12.2\%\\
				$\rho=0.5, r=2$  & 7.2\% & 7.3\% & 6.2\% & 4.6\%\\
				$\rho=0.9, r=0$  & 67.3\% & 67.7\% & 93.9\% & 143.3\%\\
				$\rho=0.9, r=0.25$  & 51.6\% & 52.1\% & 65.2\% & 79.9\%\\
				$\rho=0.9, r=1$  & 26.6\% & 26.8\% & 28.5\% & 24.2\%\\
				$\rho=0.9, r=2$  & 13.6\% & 13.7\% & 12.2\% & 8.3\%\\
				\bottomrule
			\end{tabular}
		\end{footnotesize}
		\caption{Improvement in Market Efficiency (in percentage) for the Continuation Model.}
		\label{table:First_setting}
	\end{center}
\end{table}

\begin{table}[H]
	\begin{center}
		\begin{footnotesize}
			\begin{tabular}{@{} lcccc @{}}
				\toprule
				Parameters   &   {\sc P-rank} & {\sc Q-rank} & {\sc D-rank} & {\sc R-rank} \\
				\midrule
				$\rho=0.5, r=0.25$  & 13776.1 & 13804.1 & 12000.1 & 9393.8\\
				$\rho=0.5, r=1$  & 12579.0 & 12565.5 & 10643.7 & 7885.0 \\
				$\rho=0.9, r=0.25$  & 16784.7 & 16840.9 & 15435.7 & 12680.8\\
				$\rho=0.9, r=1$  & 14041.4 & 14059.1 & 11926.5 & 8741.7\\
				\bottomrule
			\end{tabular}
		\end{footnotesize}
		\caption{Market Efficiency in the Continuation Model.}
		\label{table:First_setting_totaldownloads}
	\end{center}
\end{table}


Table \ref{table:First_setting} presents results on market efficiency
(i.e., the number of downloads) for the trial and offer market with
continuation. The most interesting message from these results, is the       
observation that the popularity and random rankings improve more than
the performance and quality rankings, unless the quality has less impact ($r=2$, because the continuation is decreasing in $r$) in the continuation. 
This can be explained by    
the fact that the continuation provides a way to correct a potentially  
weak ranking. However, as indicated in Table
\ref{table:First_setting_totaldownloads}, this correction is not
enough to bridge the gap with the performance and quality rankings.

\begin{figure}[H]
	%
	%
	\begin{centering}
		\includegraphics[width=0.45\linewidth]{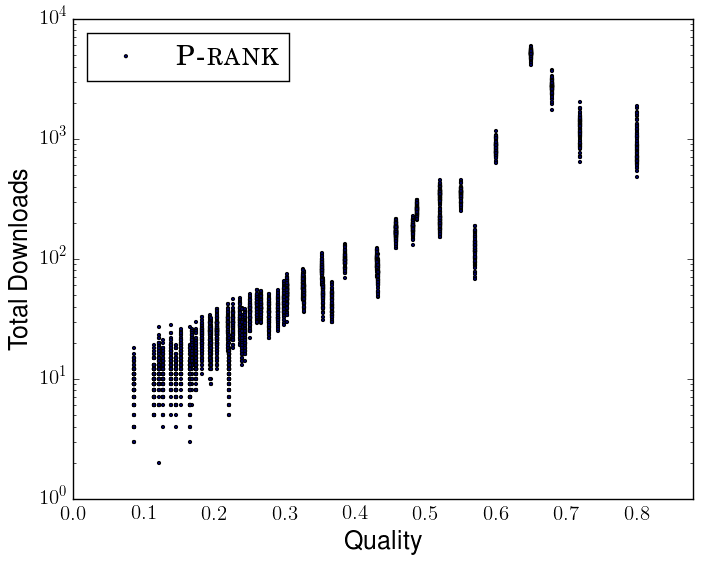}
		\includegraphics[width=0.45\linewidth]{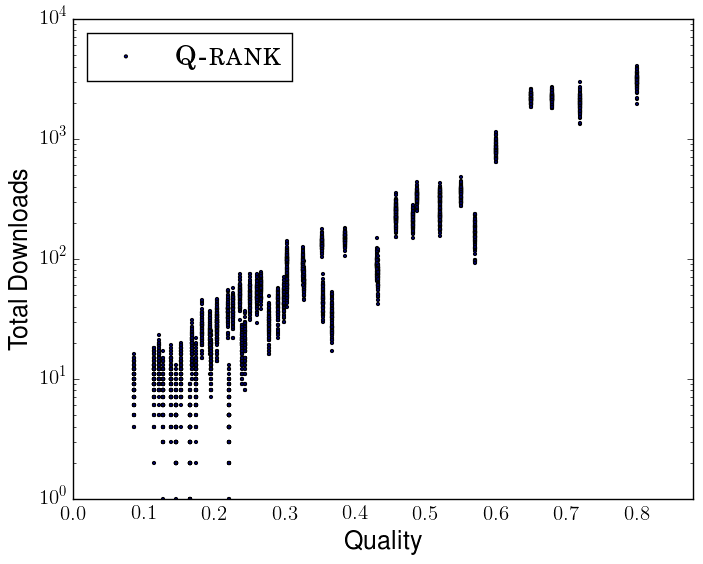} 
		\includegraphics[width=0.45\linewidth]{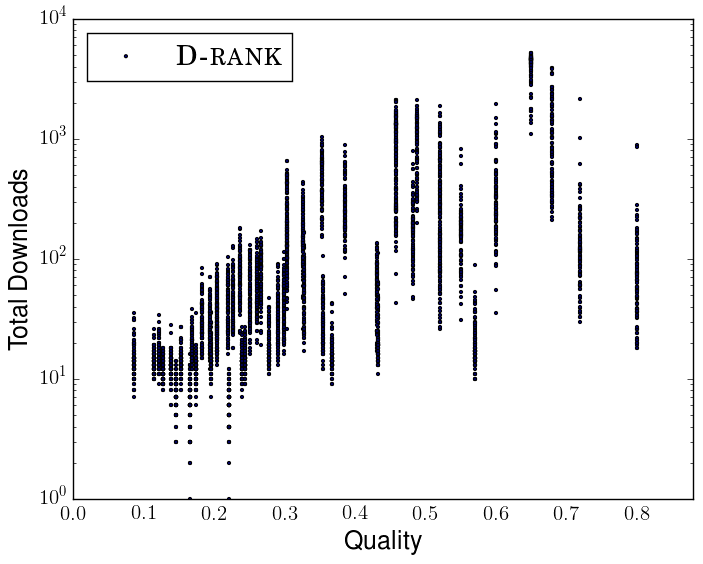}
	\end{centering}
	\centering{}

	\caption{The Distribution of Downloads Versus Song Qualities for $\rho=0.9$, $r=1$. The songs
		on the x-axis are ranked by increasing quality from left to
		right. Each dot is the number of download of a product in
		one of the 100 experiments.}
	\label{fig:q1}
\end{figure}

Figure \ref{fig:q1} depicts experimental results on the predictability
of the market under the continuation model under various ranking
policies. The figure plots the number of
downloads of each song for 100 experiments. In the plots, the songs
are ranked by increasing quality from left to right on the
x-axis. Each dot in the plot shows the number of downloads of a song
in one of the 100 experiments. The results are essentially unchanged    
when moving from the traditional to a continuation multinomial logit
model. The popularity ranking still exhibits significantly more
unpredictability than the performance and quality rankings and the
continuations are not able to compensate for the inherent
unpredictability.
\section{Conclusion and Future Work}

Motivated by applications in online markets, this paper generalizes
the ubiquitous multinomial logit model to a setting that allows market
participants to sample multiple products before deciding whether to
purchase. The paper showed that trial-offer markets with continuation
can be reduced to the original trial-offer model, transferring many
fundamental properties of ranking policies to a more general
setting. In particular, the quality ranking still benefits from
position bias and social influence. Moreover, under a general class of
continuations, the quality ranking is also preserved and the market
reaches the same asymptotic equilibrium. Experimental
results shows that the continuation model compensates for some of the
weaknesses of the popularity ranking by boosting its market
performance more than the quality and performance ranking, unless the
continuation probability depends too strongly on quality. Our current
research aims at generalizing these results further to hierarchical
trial-offer markets.

\newpage

\bibliographystyle{aaai}
\bibliography{model}

\newpage

\appendix
\section*{Appendix}
\label{appendix}
In this section we provide the proofs missing from the main text.

\printproofs
\end{document}